\newtheorem{theorem}{Theorem}
\newlength{\figwidth}
\DeclareMathOperator*{\argmax}{argmax}
\newcommand*{\rom}[1]{\expandafter\@slowromancap\romannumeral #1@}
\newcommand{\dis}{d}
\newcommand{\dhalf}{m}
\newcommand{\dspec}{A_i(\dis)}
\newcommand{\dset}{\mathcal{D}}
\newcommand{\rcvd}{\boldsymbol{r}}
\newcommand*\diff{\mathop{}\!\mathrm{d}}
\newcommand{\numcodes}{M}
\newcommand{\dimension}{K}
\newcommand{\signal}{\boldsymbol{s}}
\begin{document}
%
\title{A Modified Union Bound on Symbol Error Probability for Fading Channels}
%
%
%

\author{Tian~Han,
Rajitha~Senanayake,
Peter~Smith,
Jamie~Evans}
\maketitle

\begin{abstract}
In this paper, we propose a new upper bound on the error probability performance of maximum-likelihood (ML) detection. The proposed approach provides a much tighter upper bound when compared to the traditionally used union bound, especially when the number of pairwise error probabilities (PEPs) is large. In fact, the proposed approach tightens the union bound by first assuming that a detection error always occurs in a deep fading event where the channel gain is lower than a certain threshold. A minimisation is then taken with respect to the gain threshold in order to make the upper bound as tight as possible. We also prove that the objective function has a single minimiser under several general assumptions so that the minimiser can be easily found using optimisation algorithms. The expression of the new upper bound under correlated Rayleigh fading channels is derived and several analytical and numerical examples are provided to show the performance of the proposed bound.

\end{abstract}

\begin{IEEEkeywords}
\noindent
Maximum-likelihood detection, error probability, union bound, fading channels.
\end{IEEEkeywords}

%
\IEEEpeerreviewmaketitle

\section{Introduction}
%
%
%
%
\IEEEPARstart{I}{n} wireless communication systems, error probability performance is used as an important measure of the system reliability. The analytical derivation of such performance measures is very important as it provides insights and a reduction in computation time compared to Monte Carlo simulations. However, the calculation of the exact error probability is a very challenging problem, especially for communication systems with complex constellations and, to the best of our knowledge, is still an open problem. The difficulty lies in analyzing the complex decision regions formed in a general constellation. 
In the past, research has focused on deriving upper and lower bounds in order to approximate the error probability performance. One simple upper bond on the error probability is the union bound, which makes use of Boole's inequality. 
While it is commonly used in the literature due to its low computational complexity \cite{Candreva10}, it can significantly overestimate the exact error probability, as it adds pairwise error probabilities (PEPs) up independently. In fact, when the number of pairwise error events is large, it can even result in a bound greater than $1$ in the low signal-to-noise ratio (SNR) region.

Some research has focused on looking for tighter upper bounds based on the union bound under particular scenarios. In \cite{Ma13, Liu20}, techniques for upper bounding the error probability of linear block codes are presented making use of the truncated weighted spectrum. Gallager's first bounding technique (GFBT) is considered in both \cite{Ma13} and \cite{Liu20} in order to improve the union bound in the low SNR region. In \cite{Mouchtak16}, tight upper bounds on the error probability performance of convolutional codes over correlated Rayleigh fading channels are derived based on analytical upper bounds for the Gaussian Q-function and a lower bound on the channel fading covariance matrix. 
Upper bounds on the performance of linear block codes under maximum likelihood decoding operating over exponentially correlated generalised-fading channels are proposed in \cite{Mouchtak19}. Particularly, the analysis considers three different fading scenarios, namely Rice, Nakagami-m, and Weibull fading. In \cite{Luong15}, a new upper bound on the bit error probability of high-rate spatial modulation system using quadrature amplitude modulation (QAM) over a Rayleigh fading channel is presented. The closed-form expression is derived by eliminating unnecessary PEPs and applying Verdu's theorem \cite{Verdu87}. In \cite{Yoon20}, a simplified and more tractable bound for evaluating the performance of dual carrier modulation (DCM) over a Nakagami-m fading channel is presented by using the generic formula in \cite{Park06}. 

In this paper, we consider a signalling scheme with a general constellation and propose an improved upper bound on the error probability performance. The proposed result is derived for any general fading channel model and can be applied to a vast range of examples that involve a large number of pairwise error events resulting in poor union bounds. Our approach takes into account the effect of the deep fading event on the error probability performance. The resulting upper bound is effective not only in the high SNR region, but also in the very low SNR region. We present several interesting numerical examples to illustrate the applicability of our results to a range of scenarios.

The notations used throughout this paper are given as follows. We use upper and lower case bold symbol letters to represent matrices and  vectors, respectively. For a vector $\boldsymbol{v}$, we use $\| \boldsymbol{v}\|$ to represent its $L^2$-norm. We use $\boldsymbol{I}_P$ to represent a $P \times P$ identity matrix and diag$(x_1, \, \cdots, \, x_P)$ to represent a $P \times P$ diagonal matrix with diagonal elements $x_1, \, \cdots, \, x_P$.
We use calligraphic letters to represent sets and $\mathbb{R}$ and $\mathbb{C}$ to represent the sets of real and complex numbers, respectively. For a real number $x \in \mathbb{R}$, we use $\lfloor x \rfloor$ to represent the operation of rounding $x$ down to the largest integer less than or equal to $x$. For a complex number $c \in \mathbb{C}$, we denote $\Re \{c \}$ and $|c|$ as its real part and its absolute value, respectively. For a square matrix $\boldsymbol{C} \in \mathbb{C}^{P\times P}$, we use $\boldsymbol{C}^T$ and $\boldsymbol{C}^H$ to represent its transpose and Hermitian, respectively.
We use $Q(\cdot)$ to represent the Gaussian Q-function. We use $A \equiv B$ to represent the statistical equivalence between the two random variables $A$ and $B$. For a real valued random vector $\boldsymbol{v} \in \mathbb{R}^{P \times 1}$, we use $\boldsymbol{v} \sim \mathcal{N}(\boldsymbol{\mu}, \boldsymbol{\Sigma})$ to denote that $\boldsymbol{v}$ follows a $P$-dimensional Gaussian distribution where $\boldsymbol{\mu}\in \mathbb{R}^{P \times 1}$ and $\boldsymbol{\Sigma}\in \mathbb{R}^{P \times P}$ are its mean vector and covariance matrix, respectively. For a complex valued random vector $\boldsymbol{v} \in \mathbb{C}^{P \times 1}$, we use $\boldsymbol{v} \sim \mathcal{CN}(\boldsymbol{0}, \boldsymbol{\Sigma})$ to represent that $\boldsymbol{v}$ follows a $P$-dimensional circularly symmetric complex zero mean Gaussian distribution where $\boldsymbol{\Sigma}\in \mathbb{C}^{P \times P}$ represents the covariance matrix. 



The rest of this paper is organised as follows. In section \ref{sec:formulation}, we provide the system model and derive the traditional union bound. In section \ref{sec:UB}, we propose the general expression for the new upper bound and prove a property of the bound. We also derive the expression over a Rayleigh fading channel. In section \ref{sec:examples}, we provide analytical and numerical examples to analyse the performance of the proposed bound. In section \ref{sec:conclusion}, we make conclusions and discuss some possible future extensions. 

\section{Problem Formulation}   \label{sec:formulation}
\subsection{System Model} 
Let us consider a general complex valued signalling scheme of dimension $\dimension$, in which the real part and the imaginary part denote the in-phase and quadrature components, respectively. Assume that there are $\numcodes$ signals in this signalling scheme, denoted as $\signal_1, \, \cdots, \,\signal_{\numcodes}$. The constellation is normalised such that the average energy of the scheme $\frac{1}{\numcodes}\sum_{i=1}^{\numcodes} \|\signal_i\|^2 = 1$. For example with Quadrature phase shift keying (QPSK), we have $K = 1$, $M = 4$ and $\{s_1 = 1, \,s_2 = j, \, s_3 = -1, \,s_4 = -j\}$. 
The distance spectrum, denoted as $\dspec$, $\dis \in \dset \subset \mathbb{R}$, is defined as the number of signals with Euclidean distance $\dis$ to the $i$-th signal, where $\dset$ is the set of all possible distances between two signals. Note that the distance spectrum is dependent on the particular signal chosen.


Let us consider a system with a single transmit antenna and $N$ receive antennas. Under the block fading channel model, the $N$ channel fading factors remain unchanged during the transmission of each signal. Assuming the transmitted signals are equally likely, the received signal of such a system can be represented by an $N\dimension \times 1$ vector as
\begin{equation}
\rcvd = \sqrt{E} \boldsymbol{H}  \signal_i + \boldsymbol{n},
\end{equation}
where
\begin{equation}
    \boldsymbol{H} = \begin{bmatrix}
    h_1 \boldsymbol{I}_{\dimension} \\
    \vdots \\
    h_N \boldsymbol{I}_{\dimension}
\end{bmatrix},
\end{equation}
with $h_n$ representing the $n$-th complex channel fading factor $\forall n \in \{1,\,\cdots,\, N \}$, $E$ representing the average energy contained in all possible transmitted signals
, $\sqrt{E}\boldsymbol{s}_i, \, \boldsymbol{s}_i \in \{\boldsymbol{s}_1, \, \cdots, \, \boldsymbol{s}_{\numcodes}\}$ representing the transmitted signal and $\boldsymbol{n} \sim \mathcal{CN}(0, \sigma^2 \boldsymbol{I}_{N\dimension})$ representing the complex additive white Gaussian noise (AWGN) vector. Assuming that perfect channel state information is available at the receiver, the maximum likelihood (ML) detection of the received signal, denoted by $\hat{\boldsymbol{s}}_i$, is given by
\begin{equation}
\hat{\signal}_i = \argmax_{\signal_k\in \{\signal_1,...,\signal_{\numcodes}\}} 2 \Re\left\{\boldsymbol{r}^H \boldsymbol{H} \signal_k\right\} -  \sqrt{E} \|\signal_k \|^2 \sum_{n=1}^N |h_n|^2  , \label{eq:ML}
\end{equation}
where the maximisation is over all possible transmitted signals.

\subsection{Error Probability Analysis}
As mentioned earlier, the exact calculation of the error probability performance for a general constellation is still an open problem. While the exact analytical expression may be intractable due to the complex form of the decision regions, several upper and lower bounds have been proposed in the literature to approximate the error probability performance. From among them, the union bound is a widely used upper bound which leads to a simpler expression. For the proposed signalling scheme and ML detection, the union bound on the block error probability can be expressed as
\begin{equation}
    P_e \leq P_e^{UB} = \frac{1}{\numcodes}\sum_{i=1}^{\numcodes}\sum_{\dis \in \dset} \dspec P_{ik_{\dis}}, \label{eq:UB}
\end{equation}
where $P_e$ is the exact block error probability, $P_e^{UB}$ is the union bound, $P_{ik_{\dis}}$ denotes the 
PEP of preferring the $k_d$-th signal when the $i$-th signal is transmitted, which can be expressed as
\begin{equation}
P_{ik_{\dis}} = \Pr\left[\xi_{ii}<\xi_{ik_{\dis}} \right], \label{eq:pik}
\end{equation}
where $\xi_{ik}$ is the $k$-th decision variable when the $i$-th signal is transmitted, which is defined as
\begin{equation}
\xi_{ik} = 2 \Re\left\{\boldsymbol{r}^H \boldsymbol{H} \signal_k\right\} - \sqrt{E}\|\signal_k \|^2\sum_{n=1}^N |h_n|^2 . \label{eq:dv}
\end{equation}
Note that the Euclidean distance between the $k_d$-th signal and the $i$-th signal is $d$.   

Conditioned on the channel fading, \eqref{eq:pik} can be re-expressed as
\begin{equation}
    P_{ik_{\dis}} = \int_{0}^{\infty} \Pr\left[\xi_{ii}<\xi_{ik_{\dis}} \mid X=x \right]f_X(x)\diff x, \label{eq:pik1}
\end{equation}
where $f_X(x)$ is the probability density function (pdf) of the fading channel gain, $X$, which is defined as $X \equiv \boldsymbol{h}^H\boldsymbol{h}$, with $\boldsymbol{h}$ representing the channel fading vector, $\boldsymbol{h} = [h_1,\, \cdots,\, h_N]^T \in \mathbb{C}^{N \times 1}$.
Substituting \eqref{eq:dv} into \eqref{eq:pik1} and doing some mathematical manipulations we can reexpress the PEP as
\begin{equation}
P_{ik_{\dis}} = \int_{0}^{\infty} Q\left( \dis\sqrt{\frac{x}{2}\frac{E}{\sigma^2}} \right) f_X(x)\diff x. \label{eq:conditioned_pik2}
\end{equation}
Substituting \eqref{eq:conditioned_pik2} into \eqref{eq:UB}, the union bound for a general complex constellation can be expressed as
\begin{equation}
P_e^{UB} = \frac{1}{\numcodes}\sum_{i=1}^{\numcodes}\sum_{\dis \in \dset} \dspec \int_{0}^{\infty} Q\left( \dis\sqrt{\frac{x}{2}\frac{E}{\sigma^2}} \right) f_X(x)\diff x. \label{eq:UB_fin}
\end{equation}

While the union bound has been widely used in the literature due to its simplicity, it over-estimates the error probability performance and contains no information when its value is above 1. This is more prominent in applications with a large number of signals as here the union bound accumulates a large number of PEPs making the upper bound quite loose. 
Furthermore, the union bound might be loose in fading channels due to the existence of deep fading events. When the channel is in a deep fade, the instantaneous received SNR is low. This results in large PEPs, making the union bound very large and uninformative.

Motivated by these limitations, in the following we propose a new upper bound that is tighter than the union bound in block fading channels.

\section{The New Upper bound} \label{sec:UB}
Let us first define the deep fading event as an event in which the instantaneous channel gain, $X$, falls below a predefined threshold, i.e., $X < \gamma$. 
In order to deal with such events, we propose a new upper bound by assuming that a detection error always occurs when the channel gain is lower than $\gamma$. Thus, the error probability expression can be evaluated separately for $X < \gamma$ and $X \geq \gamma$ to produce the new upper bound as given below 
\begin{equation}
    P_e^{NB} = \min_{\gamma \geq 0} \left\{\Pr \left[X <\gamma \right] + \Pr \left[X \geq \gamma \right]\frac{1}{\numcodes}\sum_{i=1}^{\numcodes}\sum_{\dis \in \dset} \dspec \Tilde{P}_{ik_{\dis}} \right\}, \label{eq:newUB}
\end{equation}
where the minimisation is taken with respect to $\gamma$ in order to make the upper bound as tight as possible \cite{han2021} 
and $\Tilde{P}_{ik_{\dis}}$ is the PEP conditioned on $X \geq \gamma$ which can be expressed as
\begin{equation}
\Tilde{P}_{ik_{\dis}} = \Pr\left[\xi_{ii}<\xi_{ik_{\dis}} \mid X \geq \gamma \right]. \label{eq:conditioned_pik}
\end{equation}
Reexpressing \eqref{eq:conditioned_pik} as
\begin{equation}
    \Tilde{P}_{ik_{\dis}} = \int_{\gamma}^{\infty} \Pr\left[\xi_{ii}<\xi_{ik_{\dis}} \mid X=x \right]\frac{f_X(x)}{\Pr[X \geq \gamma]}\diff x, \label{eq:conditioned_pik1}
\end{equation}
and substituting \eqref{eq:conditioned_pik1} into \eqref{eq:newUB} we obtain the new upper bound on the block error probability performance as
\begin{equation}
    P_e^{NB} = \min_{\gamma \geq 0} \left\{\Pr \left[X<\gamma \right] + \frac{1}{\numcodes}\sum_{i=1}^{\numcodes}\sum_{\dis \in \dset} \dspec \int_{\gamma}^{\infty} Q\left( \dis\sqrt{\frac{x}{2}\frac{E}{\sigma^2}} \right) f_X(x)\diff x \right\}. \label{eq:newUB_fin}
\end{equation} 
Note that the extra computational complexity required to calculate the new bound compared to \eqref{eq:UB_fin} is mainly introduced due to the minimisation procedure. In the following, we prove that the objective function to be minimised in \eqref{eq:newUB_fin} has a single minimum under some very general assumptions. Hence, the optimisation problem can be solved very efficiently using simple algorithms.


\begin{theorem} \label{theo:property}
Consider a general complex valued signalling scheme of dimension $\dimension$ in which there are $\numcodes \geq 3$ signals, $s_i(t), i \in \{1,\,\cdots,\, \numcodes\}$. Assume that the constellation of this signaling scheme is normalised such that $\frac{1}{\numcodes}\sum_{i=1}^{\numcodes} \|\signal_i\|^2 = 1$. $X$ is a continuous random variable representing the channel gain which is defined in $[0, \, \infty)$. Assume its pdf, denoted by $f_X(x)$, is differentiable, with $f_X(x) = 0,  x \in (-\infty, 0)$, $f_X(0) \geq 0$, and $f_X(x) > 0, x \in (0, +\infty)$. Then, the objective function defined as 
\begin{equation}
    G(\gamma) = \Pr \left[X<\gamma \right] + \frac{1}{\numcodes}\sum_{i=1}^{\numcodes}\sum_{\dis \in \dset} \dspec \int_{\gamma}^{\infty} Q\left( \dis\sqrt{\frac{x}{2}\frac{E}{\sigma^2}} \right) f_X(x)\diff x, \; \gamma \in [0, \, \infty),
\end{equation}
has a single minimiser. Note that $\dspec$, $\dis \in \dset \subset \mathbb{R}$, is the distance spectrum defined as the number of signals with Euclidean distance $\dis$ to the $i$-th signal in the normalised constellation and $\dset$ is the set of all possible Euclidean distances between each two signals. Besides, $E$ is the average transmitted signal energy and $\sigma^2$ is the variance of each element of the AWGN vector.
\end{theorem}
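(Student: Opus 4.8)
The plan is to establish that $G$ is strictly unimodal on $[0,\infty)$ by analysing the sign of its derivative. First I would rewrite $\Pr[X<\gamma]$ as the cumulative distribution function $F_X(\gamma)$ and differentiate $G$ with respect to $\gamma$. The first term contributes $f_X(\gamma)$, while each integral with variable lower limit contributes $-Q\!\left(\dis\sqrt{\gamma E/(2\sigma^2)}\right)f_X(\gamma)$ by the fundamental theorem of calculus. Collecting terms, the derivative factors cleanly as
\begin{equation}
    G'(\gamma) = f_X(\gamma)\left[1 - \frac{1}{\numcodes}\sum_{i=1}^{\numcodes}\sum_{\dis\in\dset}\dspec\, Q\!\left(\dis\sqrt{\frac{\gamma}{2}\frac{E}{\sigma^2}}\right)\right].
\end{equation}
Since $f_X(\gamma)>0$ on $(0,\infty)$, the sign of $G'$ is governed entirely by the bracketed factor, which I would denote by $h(\gamma)$.

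The key step is then to prove that $h$ is strictly increasing with exactly one sign change. Writing $g(\gamma)$ for the double sum, each summand $Q\!\left(\dis\sqrt{\gamma E/(2\sigma^2)}\right)$ is strictly decreasing in $\gamma$, because $\dis>0$ for every $\dis\in\dset$ (distances are taken between distinct signals) and $Q$ is strictly decreasing; as the coefficients $\dspec$ are nonnegative and not all zero, $g$ is strictly decreasing and hence $h=1-g$ is strictly increasing. For the boundary behaviour I would use the counting identity $\sum_{\dis\in\dset}\dspec = \numcodes-1$, since each of the other $\numcodes-1$ signals sits at exactly one distance from signal $i$. Combined with $Q(0)=1/2$ this gives $h(0)=1-(\numcodes-1)/2\leq 0$ precisely because $\numcodes\geq 3$, while $Q(\infty)=0$ yields $\lim_{\gamma\to\infty}h(\gamma)=1>0$.

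Putting these together, strict monotonicity and continuity of $h$ guarantee a unique zero $\gamma^{*}\in[0,\infty)$ with $h<0$ on $[0,\gamma^{*})$ and $h>0$ on $(\gamma^{*},\infty)$. Multiplying by $f_X(\gamma)>0$ shows $G'<0$ on $(0,\gamma^{*})$ and $G'>0$ on $(\gamma^{*},\infty)$, so $G$ strictly decreases and then strictly increases, and $\gamma^{*}$ is its unique minimiser.

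I expect the main obstacle to lie less in the calculus than in the careful bookkeeping needed to justify differentiating the integral with a variable lower limit, factor out $f_X(\gamma)$, and establish the combinatorial identity $\sum_{\dis\in\dset}\dspec = \numcodes-1$ together with its interaction with the hypothesis $\numcodes\geq 3$, which is exactly what forces $h(0)\leq 0$ and thereby excludes a degenerate situation. A minor point to handle cleanly is the endpoint $\gamma=0$, where $f_X(0)$ may vanish; since this affects only a single point, it does not disturb the strict monotonicity of $G$ on the open subintervals, so the conclusion of a single minimiser stands.
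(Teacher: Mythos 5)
Your proof is correct, and its first half coincides with the paper's own argument: the same factorisation $G'(\gamma)=f_X(\gamma)\bigl(1-\frac{1}{\numcodes}\sum_{i=1}^{\numcodes}\sum_{\dis\in\dset}\dspec\,Q\bigl(\dis\sqrt{E\gamma/(2\sigma^2)}\bigr)\bigr)$, the same monotonicity of the bracketed factor $h(\gamma)$, and the same endpoint evaluations $h(0)=1-(\numcodes-1)/2\le 0$ (via $Q(0)=1/2$, the counting identity $\sum_{\dis\in\dset}\dspec=\numcodes-1$, which you state correctly while the paper misprints it as $K-1$, and the hypothesis $\numcodes\ge 3$) together with $\lim_{\gamma\to\infty}h(\gamma)=1>0$, yielding the unique root $\gamma^*$. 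You part ways at the conclusion: the paper computes the second derivative, verifies $G''(\gamma^*)\ge 0$, and dismisses the possible extra stationary point at $\gamma=0$ (present when $f_X(0)=0$) with the bare assertion that it ``must correspond to a maximum''; you instead run a first-derivative sign-change argument, getting $G'<0$ on $(0,\gamma^*)$ and $G'>0$ on $(\gamma^*,\infty)$, i.e.\ strict unimodality. Your route is more elementary and strictly more conclusive: it delivers the unique \emph{global} minimiser directly rather than a local second-order condition; it handles the $\gamma=0$ stationary point rigorously with no case analysis; it covers the borderline case $\numcodes=3$, where $h(0)=0$ forces $\gamma^*=0$ and the paper's second-derivative test degenerates (its expression blows up like $\gamma^{-1/2}$ at the origin, and incidentally the chain-rule factor written there as $(\dis\sqrt{E\gamma/(2\sigma^2)})^{-1/2}$ should have power $-1$); and it avoids that messy computation altogether. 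The bookkeeping point you flag, differentiating the integral with a variable lower limit, is unproblematic since the integrand is continuous, so both proofs rest on the same footing there.
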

\begin{proof} 
See Appendix \ref{app:proof}.
\end{proof}

While \eqref{eq:newUB_fin} is valid for any fading channel model, in this section we derive an expression for the new upper bound for the special case of correlated Rayleigh fading channels. In this scenario, the channel fading vector, $\boldsymbol{h}$, can be written as
\begin{equation}
    \boldsymbol{h} \equiv \boldsymbol{R}^{1/2}\boldsymbol{u}, \label{eq:h_Ray}
\end{equation}
where $\boldsymbol{u} \sim \mathcal{CN}(0,\boldsymbol{I}_N)$ and $\boldsymbol{R}$ denotes the $N\times N$ correlation matrix of $\boldsymbol{h}$. As a Hermitian matrix, $\boldsymbol{R}$ can be expressed as
\begin{equation}
\boldsymbol{R} = \boldsymbol{V \Omega V}^H,  \label{eq:eig_decomp_Ru}
\end{equation}
where $\boldsymbol{V}$ is an $N \times N$ unitary matrix and $\boldsymbol{\Omega}=\text{diag}(\lambda_1, ..., \lambda_N)$ is a diagonal matrix containing the eigenvalues $\lambda_1, ..., \lambda_N$ of $\boldsymbol{R}$. As such, the random variable $X \equiv \boldsymbol{h}^H\boldsymbol{h}$ can be rewritten as
\begin{equation}
    X = \boldsymbol{u}^H \boldsymbol{R} \boldsymbol{u} \equiv \sum_{j=1}^N \lambda_j |w_j|^2,  \label{eq:Ray_hHh}
\end{equation}
where the $w_j \sim \mathcal{CN}(0,1), j \in \{1,\,2,\,\cdots,\,N \}$, are statistically identical to the elements of $\boldsymbol{u}$. Thus, the pdf of $X$ can be expressed as \cite{johnson1970}
\begin{equation}
\begin{aligned}
f_X(x) = \left\{\begin{aligned}
&\sum_{j=1}^N \frac{b_j}{\lambda_j} e^{-x/\lambda_j}, &\quad x \geq 0\\
&0, &\quad x < 0,
\end{aligned}
\right.  
\end{aligned} \label{eq:pdf_ray_hHh}
\end{equation}
where $b_j = \lambda_j^{N-1} \prod_{\substack{n=1 \\ n \neq j}}^N 1/(\lambda_j-\lambda_n)$. By integrating \eqref{eq:pdf_ray_hHh}, the cumulative density function can be obtained as below
\begin{equation}
\begin{aligned}
F_X(x) = \left\{\begin{aligned}
&\sum_{j=1}^N b_j (1-e^{-x/\lambda_j}), &\quad x \geq 0\\
&0, &\quad x < 0.
\end{aligned}
\right.  
\end{aligned} \label{eq:cdf_ray_hHh}
\end{equation}
Substituting \eqref{eq:pdf_ray_hHh} and \eqref{eq:cdf_ray_hHh} into \eqref{eq:newUB_fin}, the new upper bound in a correlated Rayleigh fading  channel can be written as
\begin{equation}
    P_e^{NB} = \min_{\gamma \geq 0} \Biggl\{\sum_{j=1}^N b_j (1-e^{-\gamma/\lambda_j}) + \frac{1}{\numcodes}\sum_{i=1}^{\numcodes}\sum_{\dis \in \dset}\sum_{j=1}^N \dspec \int_{\gamma}^{\infty} Q\left( \dis\sqrt{\frac{x}{2}\frac{E}{\sigma^2}} \right) \frac{b_j}{\lambda_j} e^{-x/\lambda_j}\diff x \Biggl\}. \label{eq:newUB2}
\end{equation}
In order to simplify \eqref{eq:newUB2}, we apply Craig's formula  and do some mathematical manipulations to derive 
\begin{equation}
\begin{aligned}
    P_e^{NB} = \min_{\gamma \geq 0} \Biggl\{\sum_{j=1}^N b_j (1-e^{-\gamma/\lambda_j}) + \frac{1}{\numcodes \pi}\sum_{i=1}^{\numcodes}\sum_{\dis \in \dset}\sum_{j=1}^N \dspec 
   &\int_{0}^{\pi/2} \frac{4 \sigma^2 b_j \sin^2 \theta}{d^2 E \lambda_j + 4 \sigma^2 \sin^2 \theta}\\
   &  \times  \exp{\left( -\frac{d^2 E \lambda_j + 4 \sigma^2 \sin^2 \theta}{4 \lambda_j \sigma^2 \sin^2 \theta}\gamma\right)}\diff \theta \Biggl\}. \label{eq:newUB1}
\end{aligned}
\end{equation}

\section{Analytic examples and numerical results}  \label{sec:examples}
It is important to note that the proposed bound is presented in a general form and applies to a wide range of scenarios. To highlight its application, in the following we present three interesting examples related to orthogonal signalling, permutations based coding and Gaussian distribution based random coding. 
The channel fading we consider in all three examples is Rayleigh fading. %
\subsection{Orthogonal signalling}
We first consider orthogonal signalling of dimension $\dimension$ \cite{Proakis08}, in order to present how the proposed bound performs in a widely used signalling scheme. The number of signals, $\numcodes$, in this scheme is always the same as the dimension $\dimension$. Since the distance between each two signals is the same, the distance spectrum can be written as
\begin{equation}
    \dspec = \numcodes-1, \quad \dis = \sqrt{2}. \label{eq:dis_spec_orth}
\end{equation}

Substituting \eqref{eq:dis_spec_orth} into \eqref{eq:newUB1} and noticing that the distance spectrum is independent of the particular choice of signal, the new upper bound on the error probability can be expressed as
\begin{equation}
\begin{aligned}
P_e^{NB} = \min_{\gamma \geq 0} \Biggl\{\sum_{j=1}^N b_j (1-e^{-\gamma/\lambda_j}) + \frac{\numcodes -1}{\pi}  \sum_{j=1}^{N} &\int_{0}^{\pi/2}\frac{2b_j L \sigma^2 \sin^2\theta}{ E\lambda_j + 2 L \sigma^2 \sin^2\theta}\\ 
& \times\exp{\left(\frac{ E\lambda_j + 2 L \sigma^2 \sin^2\theta}{2\lambda_j L \sigma^2 \sin^2\theta}\gamma \right)}\diff \theta \Biggl\}. \label{eq:newUB_FSK}
\end{aligned}
\end{equation}

Fig. \ref{fig:FSK} shows the block error probability performance versus the average received SNR for orthogonal signalling with $\numcodes \in \{16,\, 512\}$. We adopt the correlated Rayleigh fading channel model and set the number of receive antennas as $N = 2$ and the correlation coefficient as $\rho = 0.1$. For the channel correlation we consider the simple exponential model. Hence, given a correlation coefficient $\rho \in [0,1]$, the $(i,j)$-th entry of $\boldsymbol{R}$ can be written as
\begin{equation}
\boldsymbol{R}(i,j) = \rho^{|i-j|}.  \label{eq:corrmat}
\end{equation}
Note that the correlation model in \eqref{eq:corrmat} will be used in the remainder of the section. 

From Fig. \ref{fig:FSK}, we observe that the new upper bound given in \eqref{eq:newUB_FSK} always outperforms the traditional union bound. We also observe that in the high SNR regime, the SNR gap between the new upper bound and the union bound is 0.5dB for $\numcodes = 16$ and 4.4dB for $\numcodes = 512$. This indicates that the effect of the significant increase in the number of PEPs with $\numcodes$ is more prominent in the union bound. 
\begin{figure}[t]
    \centerline{\includegraphics[width=10cm,height=7.8cm]{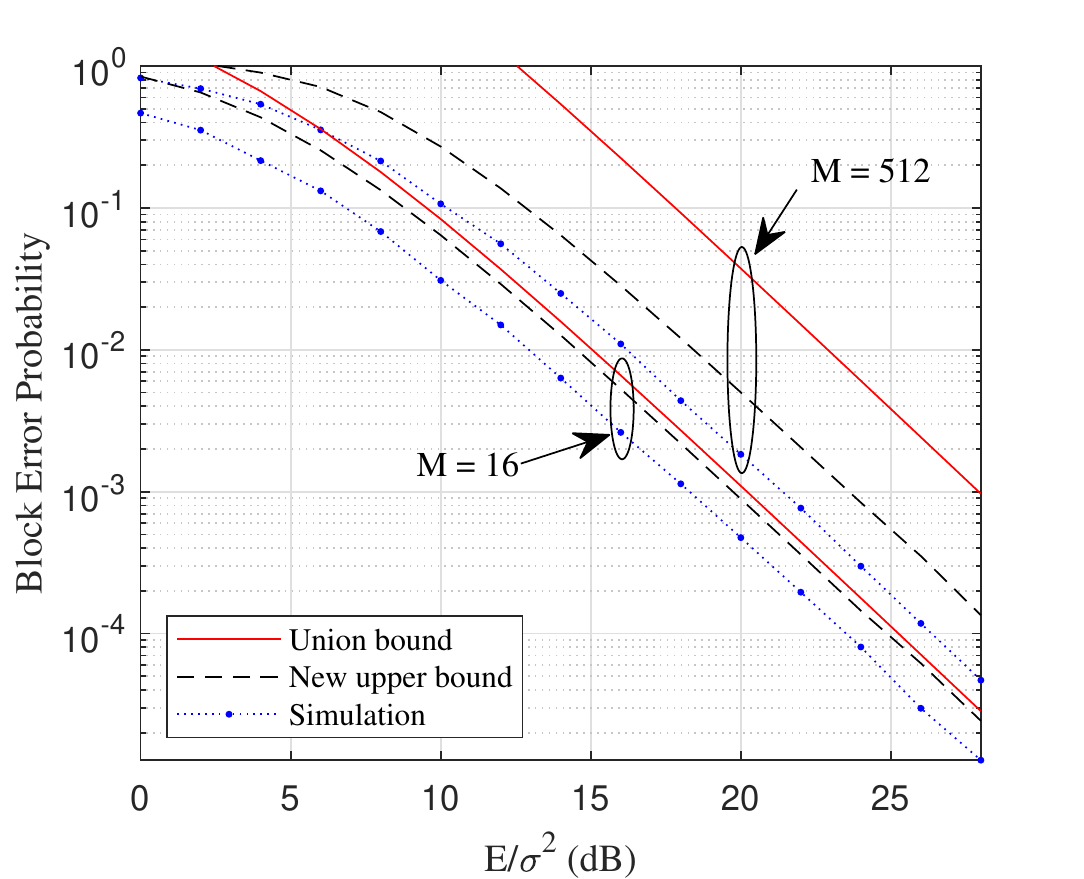}}
    \caption{The block error probability versus average received SNR for orthogonal signalling with $\numcodes \in \{16,\, 512\}$ in a correlated Rayleigh fading channel with $N = 2$ and $\rho = 0.1$.
    }\label{fig:FSK}
\end{figure}

In order to illustrate the impact of channel parameters on the proposed bound, Fig. \ref{fig:FSK_N4_rho01} and Fig. \ref{fig:FSK_N4_rho05} show the block error probability performance versus the average received SNR for orthogonal signalling with $\numcodes \in \{16,\, 512\}$ under the correlated Rayleigh fading channel model with $N = 4$ and $\rho = 0.1, 0.5$. Compared with Fig. \ref{fig:FSK} in which $N=4$, the SNR gap between the proposed bound and the union bound in Fig. \ref{fig:FSK_N4_rho01} decreases to approximately 0dB when $M=16$, and 1dB when $M=512$. Though the gap still increases with the number of PEPs, this result indicates that the improvement of the tightness is degraded by increasing the number of antennas. The SNR gaps in Fig. \ref{fig:FSK_N4_rho01} and Fig. \ref{fig:FSK_N4_rho05} are similar, which shows that the change of the correlation coefficient has little impact on the effectiveness of the proposed bound.
\begin{figure}[t]
    \centerline{\includegraphics[width=10cm,height=7.8cm]{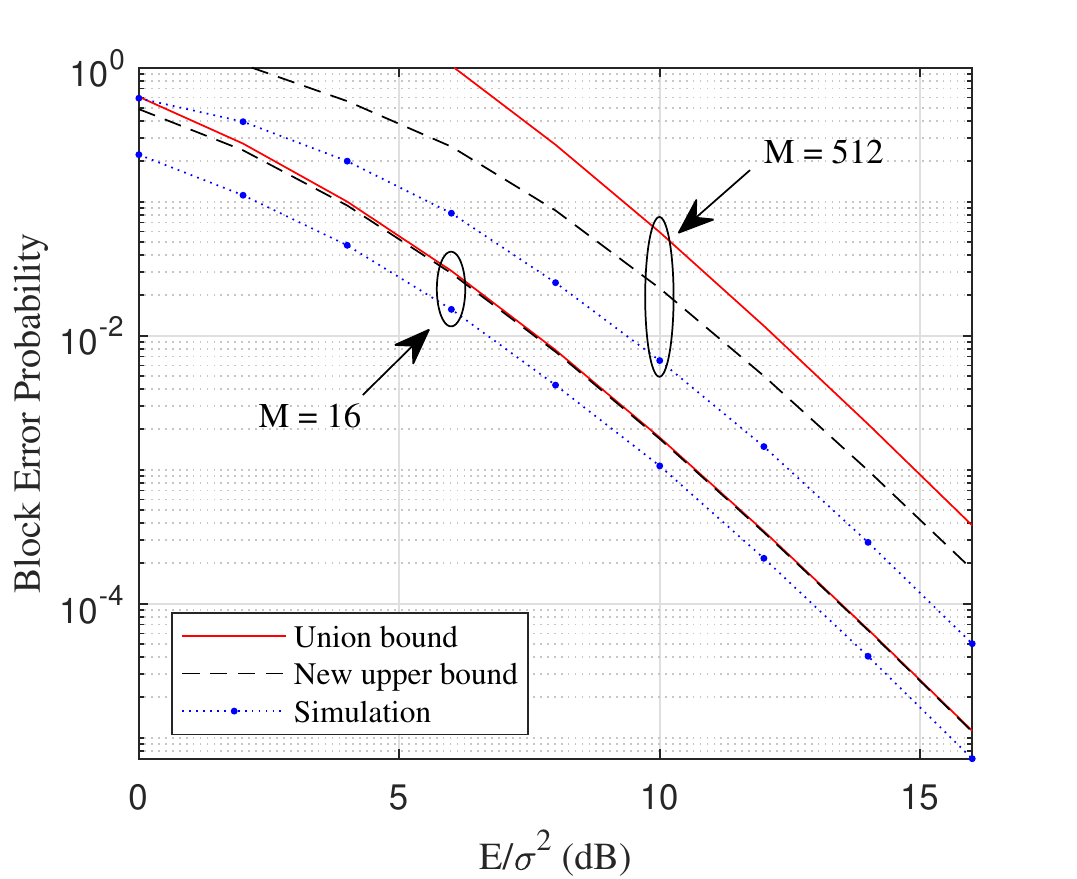}}
    \caption{The block error probability versus average received SNR for orthogonal signalling with $\numcodes \in \{16,\, 512\}$ in a correlated Rayleigh fading channel with $N = 4$ and $\rho = 0.1$.
    }\label{fig:FSK_N4_rho01}
\end{figure}
\begin{figure}[t]
    \centerline{\includegraphics[width=10cm,height=7.8cm]{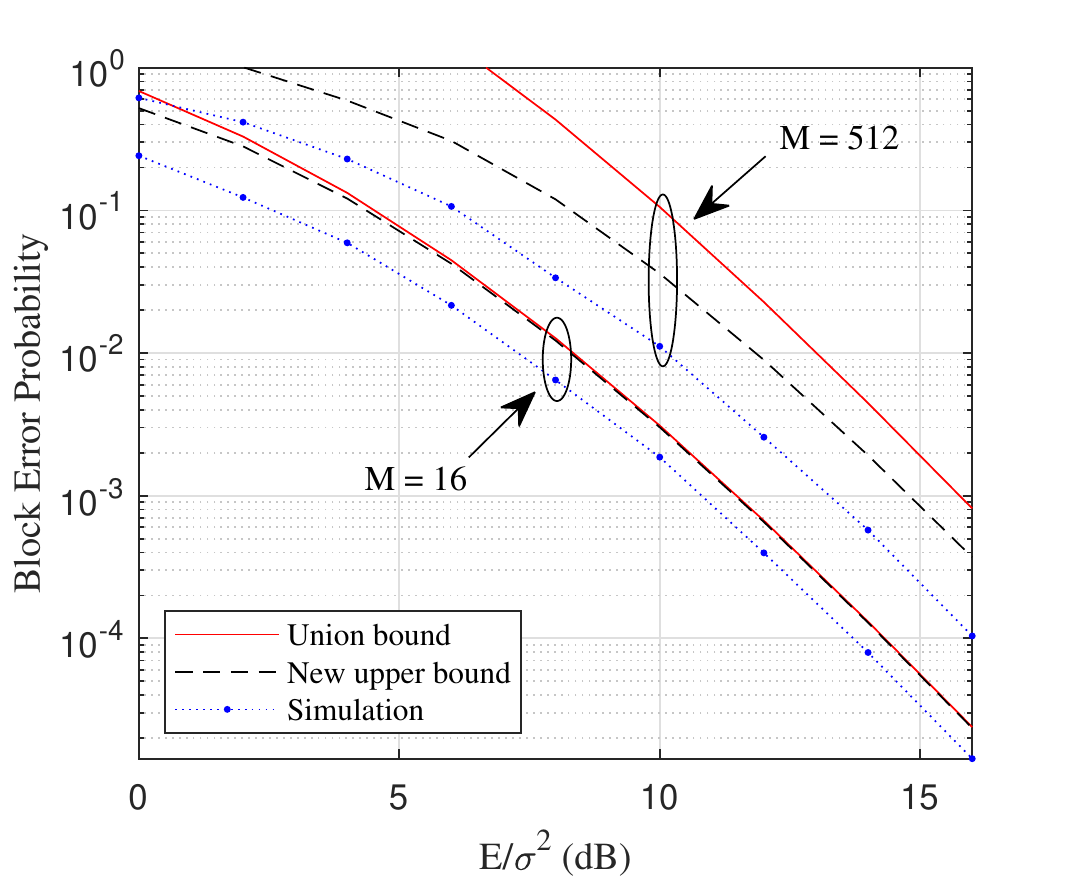}}
    \caption{The block error probability versus average received SNR for orthogonal signalling with $\numcodes \in \{16,\, 512\}$ in a correlated Rayleigh fading channel with $N = 4$ and $\rho = 0.5$.
    }\label{fig:FSK_N4_rho05}
\end{figure}

\subsection{Permutations based coding} \label{sec:UBperm}
In this example we present another topical application of the proposed bound in the area of integrated radar and communications. In \cite{Senanayake22TWC}, a new method is proposed for using permutations based random stepped frequency radar waveforms to enable joint radar and communications. Error probability performance analysis of the proposed waveform results in a large number of PEPs making the proposed upper bound a better candidate than the traditional union bound.  

Let us consider permutations based random stepped frequency radar waveforms with $L$ subpulses. We first convert the waveform into a binary code. For a waveform with $L$ subpulses, the dimension of the corresponding binary code is $\dimension = L^2$. The signal can be divided into $L$ slots, each representing a subpulse. If the frequency in the $l$-th subpulse is $f_n$, where $n\in\{0,\,1,\,...,\,L-1\}$, the $(n+1)$-th element in the $l$-th slot is one, while the other elements in this slot are 0's. As such, each signal has exactly $L$ $1$'s. 

The distance spectrum of the proposed code, regardless of the particular codeword chosen, can be expressed as
\begin{equation}
    \dspec = !\left(\frac{\dis^2 L}{2}\right) \times {L \choose \dis^2 L/ 2}, \quad \dis = \sqrt{\frac{2m}{L}}, \quad m = 2,\,3,\,\cdots,\,L, 
    \label{eq:dis_spec_perm}
\end{equation}
where $!n$ is the number of derangments of an $n$-element set \cite{hassani03}, which can be expressed as
\begin{equation}
    !n = \left\lfloor \frac{n!}{e} + \frac{1}{2} \right\rfloor.
\end{equation}
Note that the distance spectrum is independent of the particular signal choice in this example. 

Substituting \eqref{eq:dis_spec_perm} into \eqref{eq:newUB1}, the new upper bound on the error probability of the permutations based coding can be expressed as
\begin{equation}
\begin{aligned}
P_e^{NB} = \min_{\gamma \geq 0} \Biggl\{\sum_{j=1}^N b_j (1-e^{-\gamma/\lambda_j}) + \frac{1}{\pi}\sum_{\dhalf =2}^{L}\sum_{j=1}^{N}!\dhalf {L \choose \dhalf}  &\int_{0}^{\pi/2}\frac{2b_j L \sigma^2 \sin^2\theta}{\dhalf E\lambda_j + 2 L \sigma^2 \sin^2\theta} \\ & \times\exp{\left(\frac{\dhalf E\lambda_j + 2 L \sigma^2 \sin^2\theta}{2\lambda_j L \sigma^2 \sin^2\theta}\gamma \right)}\diff \theta \Biggl\}.
\end{aligned}
\end{equation}


Fig. \ref{fig:perm} shows the block error probability performance versus average received SNR for the permutations based coding scheme with $L \in \{3,\, 6,\, 9\}$. We adopt the correlated Rayleigh fading channel model with $N = 2$ and $\rho = 0.1$. As we can observe from Fig. \ref{fig:perm}, the new bound is tighter than the union bound in the full range of SNRs. When $L = 9$, the union bound is completely uninformative as its value is greater than $1$ even in the high SNR region up to 28dB, while our proposed bound is always lower than $1$. The SNR gap between the new bound and the traditional union bound increases from around 0dB to 15dB when increasing $L$ from $3$ to $9$. 
This indicates that the union bound is far more significantly affected by the $\mathcal{O}(L!)$ growth in the number of PEPs.
\begin{figure}[t]
    \centerline{\includegraphics[width=10cm,height=7.8cm]{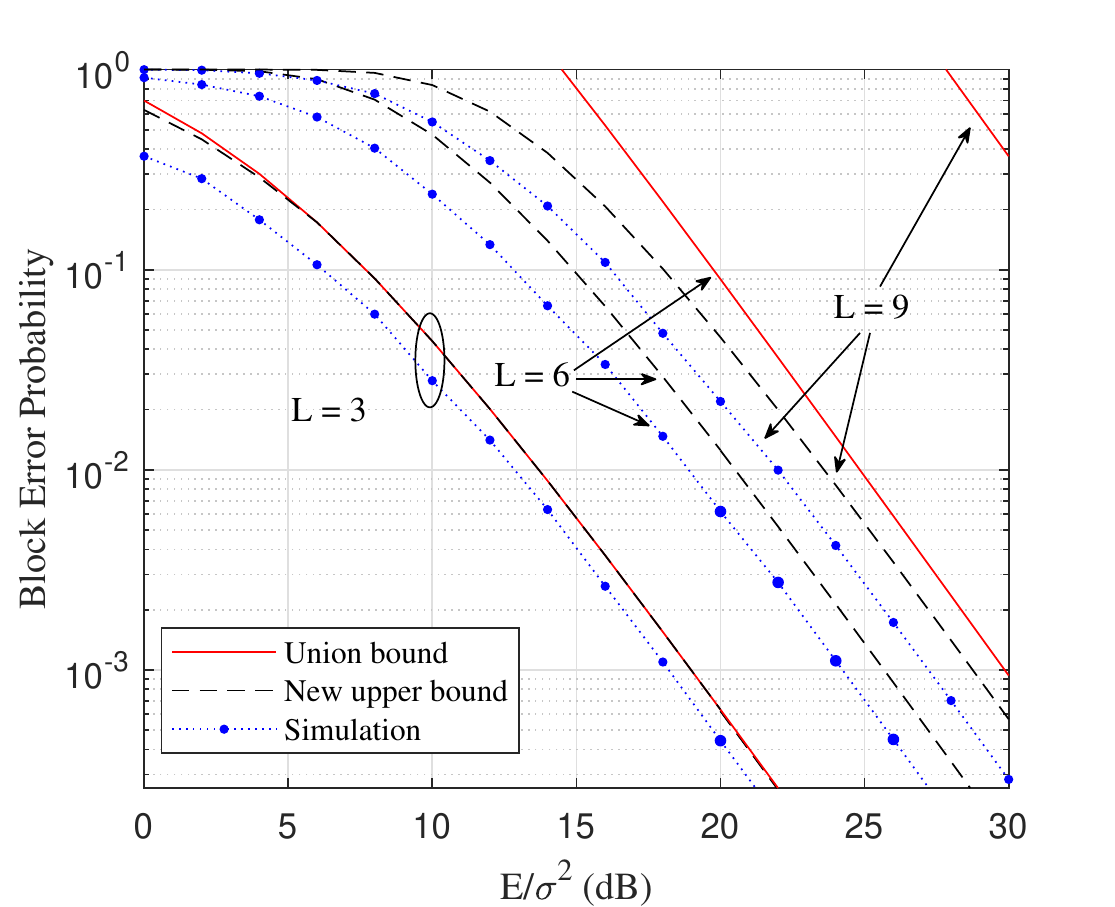}}
    \caption{The block error probability versus average received SNR for permutations based coding with $L \in \{3,\, 6,\, 9\}$ in a correlated Rayleigh fading channel with $N = 2$ and $\rho = 0.1$.
    }\label{fig:perm}
\end{figure}

\subsection{Gaussian distribution based random coding}  \label{sec:UBGaus}
Random coding is a common method for proof of theorems in information theory \cite{Cover05}. Although it is not a practical signalling method, the average error probability calculated from random code selection shows the existence of good practical codes. 
We consider the example of randomly generated codes with the same dimension, $\dimension$, but different number of symbols, $\numcodes$. We assume that the $\numcodes$ signals are generated following a circularly symmetric Gaussian distribution as
\begin{equation}
\signal_i \sim \mathcal{CN}\left(0, \boldsymbol{I}_\dimension \right), \, i \in \{1,\, \cdots,\, \numcodes\}.
\end{equation}
The signals are then normalised such that $\frac{1}{\numcodes}\sum_{i=1}^{\numcodes} \|\signal_i\|^2 = 1$. These randomly generated signals form a constellation diagram with arbitrary decision regions that are hard to evaluate.

Fig. \ref{fig:gau} provides the block error probability versus average received SNR for Gaussian distribution based random coding with $\dimension = 9$ and $\numcodes \in \{10,\, 300\}$. We adopt the correlated Rayleigh fading channel model with $N = 2$ and $\rho = 0.1$. 
\begin{figure}[t]
    \centerline{\includegraphics[width=10cm,height=7.8cm]{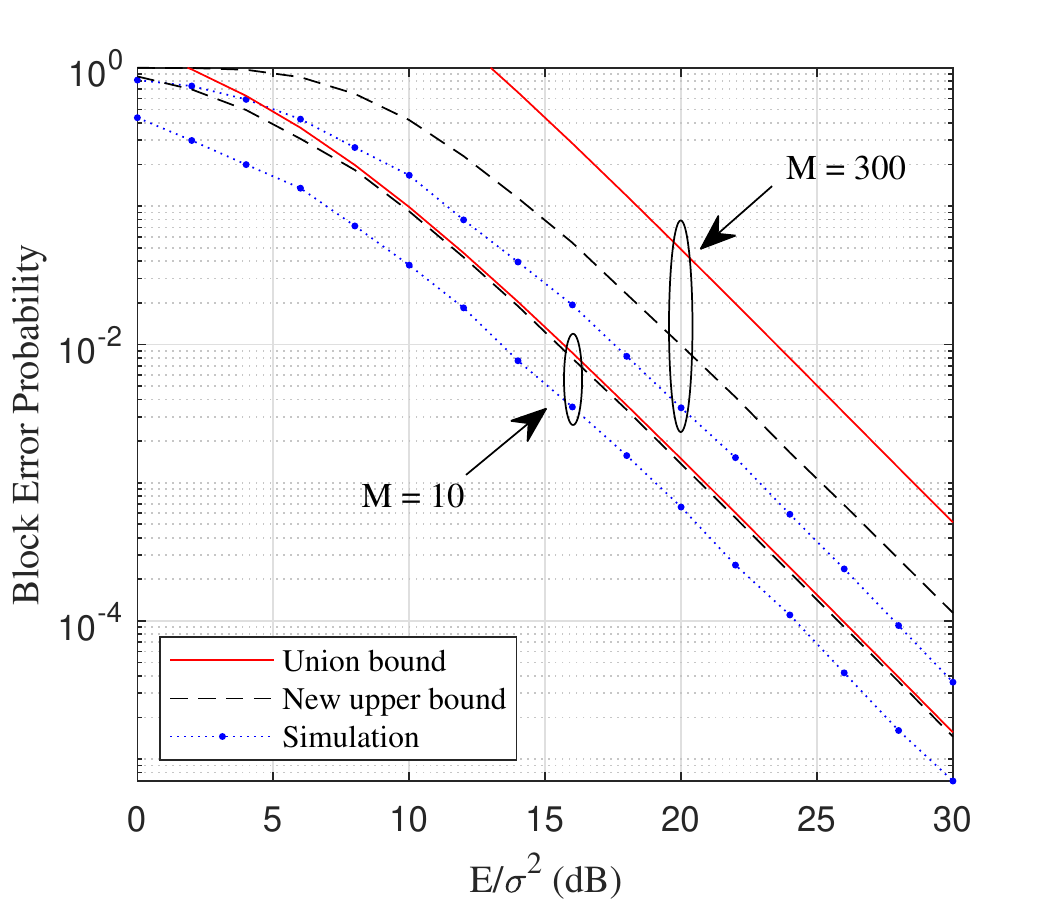}}
    \caption{The block error probability versus average received SNR for Gaussian distribution based random coding with $\dimension = 9$ and $\numcodes \in \{10,\, 300\}$ in a correlated Rayleigh fading channel with $N = 2$ and $\rho = 0.1$.
    }\label{fig:gau}
\end{figure}
Since the signals are randomly generated, it is difficult to provide a concise expression for the distance spectrum $\dspec$. Instead, we numerically calculate $\dspec$ for the randomly generated codewords and substitute it into \eqref{eq:newUB} in order to evaluate the proposed bound in Fig. \ref{fig:gau}. It can be clearly observed that in the high SNR region, the SNR gap between the proposed bound and the traditional upper bound increases from $0.2$dB to $3.5$dB when increasing $\numcodes$ from $10$ to $300$. This again emphasises the prominence of the increasing number of PEPs in the union bound and the advantage of the proposed bound when the number of pairwise error events is large.

\section{Conclusions} \label{sec:conclusion}
This paper presents an improved union bound for maximum-likelihood detection based error probability performance over fading channels. In contrast to existing work  on tightening the union bound, we focus on the deep fading events, based on which the domain of the random variable associated with the channel gain is separated into two parts. The upper bound is then tightened by assuming that a pairwise error always occurs in a deep fading event. The resulting expression provides a more meaningful upper bound when compared to the union bound which can produce an error probability bound above $1$, especially in situations with a large number of PEPs. It is important to note that the objective function of the optimisation problem required in the bound has a single minimium and can be easily solved using numerical methods.  

While the new bound is presented for any fading channel model, to gain further insight, we derive an easy-to-evaluate expression under the correlated Rayleigh model and provide several interesting examples to illustrate the improved performance of the proposed bound.





%

\appendices
\section{Proof of Theorem \ref{theo:property}}  \label{app:proof} 
We first take the derivative of $G(\gamma)$ with respect to $\gamma$, which results in
\begin{equation}
\frac{\diff G(\gamma)}{ \diff{ \gamma}} 
= f_X(\gamma)\left(1-\frac{1}{\numcodes}\sum_{i=1}^{\numcodes}\sum_{\dis \in \dset} \dspec Q\left( \dis\sqrt{\frac{E \gamma}{2\sigma^2}} \right) \right). \label{eq:newUB_1stdiff} 
\end{equation}
Note that \eqref{eq:newUB_1stdiff} becomes zero when either $f_X(\gamma)$ or $\left(1-\frac{1}{\numcodes}\sum_{i=1}^{\numcodes}\sum_{\dis \in \dset} \dspec Q\left( \dis\sqrt{\frac{E \gamma}{2\sigma^2}} \right) \right)$ is zero. 

Focusing on the second term of \eqref{eq:newUB_1stdiff}, since $\lim_{x \rightarrow \infty} Q(x) = 0$, we can show that 
\begin{equation}
    \lim_{\gamma \rightarrow \infty} \left(1-\frac{1}{\numcodes}\sum_{i=1}^{\numcodes}\sum_{\dis \in \dset} \dspec Q\left( \dis\sqrt{\frac{E \gamma}{2\sigma^2}} \right)\right) = 1 > 0. \label{eq:Ginf}
\end{equation} 
Then, by using the assumption that $\numcodes \geq 3$, we obtain
\begin{equation}
1-\frac{1}{\numcodes}\sum_{i=1}^{\numcodes}\sum_{\dis \in \dset} \dspec Q\left( 0 \right) \leq 0, \label{eq:G0}
\end{equation} 
since $Q(0) = 0.5$ and $\sum_{\dis \in \dset} \dspec = K-1$. 

Since $Q\left( \dis\sqrt{\frac{E \gamma}{2\sigma^2}} \right)$ is monotonically decreasing with $\gamma$, $\left(1-\frac{1}{\numcodes}\sum_{i=1}^{\numcodes}\sum_{\dis \in \dset} \dspec Q\left( \dis\sqrt{\frac{E \gamma}{2\sigma^2}} \right)\right)$ is monotonically increasing with $\gamma$. By combining \eqref{eq:Ginf}, \eqref{eq:G0} and the monotonically increasing property, we can ensure that there exists only a single $\gamma^* \in [0,\, \infty)$ such that 
\begin{equation}
1-\frac{1}{\numcodes}\sum_{i=1}^{\numcodes}\sum_{\dis \in \dset} \dspec Q\left( \dis\sqrt{\frac{E \gamma^*}{2\sigma^2}} \right) = 0   \label{eq:gamma_star}
\end{equation}
Since $f_X(0) \geq 0$, another possible stationary point can be introduced by the first term of \eqref{eq:newUB_1stdiff} as $\gamma = 0$. Thus, our objective function has two possible stationary points. 

Next, we proceed to take the second derivative in order to show that $\gamma^*$ gives a minimum. The second derivative of $G(\gamma)$ with respect to $\gamma$ can be expressed as
\begin{equation}
\begin{aligned}
\frac{\diff^2 G(\gamma)}{ \diff{ \gamma^2}} = \frac{\diff f_X(\gamma)}{ \diff{ \gamma}} + \frac{1}{\numcodes}\sum_{i=1}^{\numcodes}\sum_{\dis \in \dset} \dspec \left( \frac{1}{\sqrt{2\pi}}\frac{\dis^2 E}{4 \sigma^2}\exp\left(-\frac{\dis^2 E \gamma}{4 \sigma^2}\right)\left( \dis\sqrt{\frac{E \gamma}{2\sigma^2}} \right)^{-1/2}  f_X(\gamma)  \right.\\
\left. - Q\left( \dis\sqrt{\frac{E \gamma}{2\sigma^2}} \right)\frac{\diff f_X(\gamma)}{ \diff{ \gamma}} \right). \label{eq:newUB_2nddiff}
\end{aligned}
\end{equation}
Evaluating \eqref{eq:newUB_2nddiff} at $\gamma = \gamma^*$ and using the result in \eqref{eq:gamma_star} gives
\begin{equation}
\frac{\diff^2 G(\gamma^*)}{ \diff{ \gamma^2}} = \frac{1}{\numcodes}\sum_{i=1}^{\numcodes}\sum_{\dis \in \dset} \dspec \frac{1}{\sqrt{2\pi}}\frac{\dis^2 E}{4 \sigma^2}\exp\left(-\frac{\dis^2 E \gamma}{4 \sigma^2}\right)\left( \dis\sqrt{\frac{E \gamma^*}{2\sigma^2}} \right)^{-1/2}  f_X(\gamma^*) \geq 0, \label{eq:newUB_2nddiff1}
\end{equation}
which indicates that $\gamma^*$ minimises $G(\cdot)$. If $f_X(0)=0$, the only other stationary point is $\gamma = 0$ and must corresponds to a maximum. Therefore, $G(\gamma)$ has a single minimum at $\gamma = \gamma^*$.




\ifCLASSOPTIONcaptionsoff
  \newpage
\fi



%



\bibliographystyle{IEEEtran}

\end{document}